\newtheorem{theorem}{Theorem}
\definecolor{Red}{rgb}{1,0,0}
\begin{document}


\title{Enhancing the reachability of variational quantum algorithms via input-state design}

\author{Shaojun Wu}
\affiliation{Institute of  Fundamental and Frontier Sciences, University of Electronic Science and Technology of China, Chengdu, Sichuan, 611731, China}
\affiliation{Key Laboratory of Quantum Physics and Photonic Quantum Information, Ministry of Education,
 University of Electronic Science and Technology of China, Chengdu 611731, China}

\author{Shan Jin}
\affiliation{Institute of  Fundamental and Frontier Sciences, University of Electronic Science and Technology of China, Chengdu, Sichuan, 611731, China}
\affiliation{Key Laboratory of Quantum Physics and Photonic Quantum Information, Ministry of Education,
 University of Electronic Science and Technology of China, Chengdu 611731, China}

\author{Abolfazl Bayat}
\email{abolfazl.bayat@uestc.edu.cn}
\affiliation{Institute of  Fundamental and Frontier Sciences, University of Electronic Science and Technology of China, Chengdu, Sichuan, 611731, China}
\affiliation{Key Laboratory of Quantum Physics and Photonic Quantum Information, Ministry of Education,
 University of Electronic Science and Technology of China, Chengdu 611731, China}
\affiliation{Shimmer Center, Tianfu Jiangxi Laboratory, Chengdu 641419, China}

\author{Xiaoting Wang}
\email{xiaoting@uestc.edu.cn}
\affiliation{Institute of Fundamental and Frontier Sciences, University of Electronic Science and Technology of China, Chengdu, Sichuan, 611731, China}
\affiliation{Key Laboratory of Quantum Physics and Photonic Quantum Information, Ministry of Education,
 University of Electronic Science and Technology of China, Chengdu 611731, China}

\date{\today}

\begin{abstract}

Variational quantum algorithms (VQAs) face an inherent trade-off between expressivity and trainability: deeper circuits can represent richer states but suffer from noise accumulation and barren plateaus, while shallow circuits remain trainable and implementable but lack expressive power. Here, we propose a general framework to address this challenge by enhancing the VQA performance with a specially designed input state constructed using a linear combination technique. This approach systematically modified the set of states reachable by the original circuit, enhancing accuracy while preserving efficiency. We provide a rigorous proof that such framework increases the expressive capacity of any given VQA ansatz, and demonstrate its broad applicability across different ansatz families. As applications, we apply the method to ground-state preparation of the transverse-field Ising, cluster-Ising, and Fermi-Hubbard models, achieving consistently higher accuracy under the same gate budget compared with standard VQAs. These results highlight input-state design as a powerful complement to circuit design in realizing VQAs that are both expressive and trainable.

\end{abstract}


\maketitle
\section{Introduction}

Quantum computers open a new direction towards computational problems which are inherently hard to solve by classical computers. While realization of large-scale fault-tolerant quantum computers is at least a decade away, the near-term quantum computers suffer from various imperfections in their initialization, operation, and readout. A crucial question is whether such Noisy Intermediate-Scale Quantum (NISQ) devices can achieve quantum advantage with respect to classical computers in solving practical problems~\cite{Preskill2018quantumcomputingin,RevModPhys.94.015004,TILLY20221,Fedorov2022}. Among the leading approaches in this domain are variational quantum algorithms (VQAs)~\cite{Cerezo2021Variational,Cerezo2022Challenge}, which formulate computational tasks as variational optimization problems where the target quantum state minimizes a cost function. In VQAs and their alternatives, such as quantum assisted simulation~\cite{PhysRevA.104.042418,PhysRevA.104.L050401,Haug_2022Generalized}, computational complexity is divided between a NISQ computer and a classical optimizer so that such hybrid structure may outperform purely classical computers. To date, VQAs have been successfully applied across a broad range of domains, including quantum machine learning~\cite{Biamonte2017QML,arunachalam2017surveyquantumlearningtheory,Dunjko_2018,farhi2018classificationquantumneuralnetworks,PhysRevLett.122.040504,Cong2019QCNN,Pesah2021Absence,Ren2022Experimental,PhysRevA.98.032309,PhysRevA.103.052414}, quantum chemistry~\cite{Peruzzo2014,Kandala2017,Arute2020}, quantum many-body problems~\cite{PRXQuantum.2.030301,sheils2024neartermquantumspinsimulation, PhysRevB.106.214429, PRXQuantum.3.010346, PhysRevB.105.094409, PhysRevB.106.144426,kokail2019self}, optimization tasks~\cite{farhi2014quantumapproximateoptimizationalgorithm,Lacroix2020,Harrigan2021,BLEKOS20241,farhi2015quantumapproximateoptimizationalgorithm,PhysRevX.10.021067,PhysRevA.97.022304,crooks2018performancequantumapproximateoptimization,PhysRevA.103.042612,Wurtz2022counterdiabaticity,farhi2020quantumapproximateoptimizationalgorithm,doi:10.1126/science.abo6587}, quantum metrology~\cite{Koczor2020VariationalState,meyer2021variational,Kaubruegger2021quantum,Kaubruegger2023Optimal} and quantum simulation~\cite{PhysRevX.7.021050,Yuan2019theoryofvariational,PhysRevA.99.062304,McArdle2019,PhysRevLett.125.010501,XU20212181Variational}.

In a general VQA, the objective is to approximate a target quantum state $\ket{\Psi_\mathrm{tar}}$ that minimizes the average of an observable $H$, namely $\bra{\Psi_\mathrm{tar}}H\ket{\Psi_\mathrm{tar}}$. In order to find the solution, one can engineer a set of parameterized quantum states $\ket{\Psi(\bm{\theta})}\equiv U(\bm{\theta})|\Psi_0\rangle$, where $U(\bm{\theta})$ is a unitary circuit with tunable parameters $\bm{\theta}$ and $\ket{\Psi_0}$ is an input state. If the ansatz state $\ket{\Psi(\bm{\theta})}$ is sufficiently expressive, there exists an optimal parameter set $\bm{\theta}_{\mathrm{opt}}$ such that $\ket{\Psi(\bm{\theta}_{\mathrm{opt}})}$ closely approximates $\ket{\Psi_{\mathrm{tar}}}$. This parameter set is determined by minimizing the cost function $E(\bm{\theta})=\langle \Psi_0|\,U^\dagger(\bm{\theta})\,H\,U(\bm{\theta})\,|\Psi_0\rangle$ through a classical optimization loop, thereby variationally approaching the ground energy of $H$. Generalizations of VQA targeting multiple eigenvectors have also been developed~\cite{PhysRevResearch.1.033062,PhysRevResearch.6.013015,hong2024refiningweightedsubspacesearchvariational,Ding2024groundexcitedstates}. However, if the target state $\ket{\Psi_{\mathrm{tar}}}$ is not reachable by any choice of $\bm{\theta}$, then $\ket{\Psi(\bm{\theta}_{\mathrm{opt}})}$ will not be the desired target state. The reachable set of the ansatz state $\ket{\Psi(\bm{\theta})}$ depends on both the unitary operator $U(\bm \theta)$ and the input state $\ket{\Psi_0}$. A central challenge in VQAs is to balance the trade-off between expressivity and trainability~\cite{McClean2018,Wang2021NoiseInduced,Volkoff_2021,Sim2020Expressibility}. Increasing the expressivity, e.g. by using deeper quantum circuits, can cover a larger reachable quantum states but suffers from noise accumulation and barren plateaus which prevents convergence in the classical optimization~\cite{McClean2018, Holmes2022Connecting,Larocca2025}. Several attempts have been developed to overcome the trainability issues~\cite{sweke2020stochastic,larocca2022diagnosing,li2024ensemble,chen2025quantum,Cerezo2021Cost,liu2024mitigating,sack2022avoiding,Pati2021entanglement,sannia2024engineered,yao2025avoiding}. On the other hand, using less expressive ansatz, e.g. through using shallow circuits, makes the training and experimental implementation easy though comes with the risk that the target state is not in the reachable set of the ansatz state. Indeed, designing an ansatz with a suitable reachable set that contains the target state, ideally without significantly increasing the circuit depth or the number of parameters, remains a key objective in advancing VQAs.

So far, most efforts to improve VQAs have focused on circuit design for $U(\bm\theta)$. Typical strategies include hardware-efficient ansatz (HEA)~\cite{Kandala2017}, Hamiltonian variational ansatz (HVA)~\cite{farhi2014quantumapproximateoptimizationalgorithm}, unitary coupled clusters~\cite{Bartlett1989133,Romero_2019,Lee2019kUpCCGSD,PRXQuantum.2.030301,Grimsley2020TrotterizedUCCSD,PhysRevA.98.022322,PhysRevResearch.2.033421, 10.1063/1.5141835}, symmetry preserving circuits~\cite{PhysRevApplied.16.034003,Gard2020SymmetryPreserving,PhysRevA.101.052340,PhysRevResearch.3.013039,Zheng_2023,Lyu2023symmetryenhanced,PhysRevResearch.6.013015,meyer2023exploiting}, tensor-network-inspired ansatz~\cite{Huggins_2019,PhysRevResearch.6.023009,Miao2025convergencequantum}, adaptive circuits~\cite{Grimsley2019,PhysRevResearch.6.013254,Gomes2021Adaptive,D0SC06627C,Ryabinkin2018QCC,Tang2021,Zhang_2021}, quantum circuits optimized by evolutionary algorithms~\cite{934383,PhysRevA.105.052414,evolutionary2022Ding}, and machine learning enhanced quantum circuits~\cite{NEURIPS2021_97244127,Zhang_2022,Du2022QCAS,shen2023prepareansatzvqediffusion}. By contrast, the role of the input state $|\Psi_0\rangle$ has received comparatively little attention, with a few exceptions, such as Ref.~\cite{Schuld2021effect}, 
despite its direct influence on the reachable set.

In this work, we address this overlooked aspect by focusing on the input-state design in enhancing the performance of VQAs. Specifically, we introduce a measurement-based encoder that prepares the input state as a superposition of candidate states, ensuring that optimization starts from a more favorable initialization. This strategy reshapes the reachable set without altering the original circuit structure, achieving systematic performance gains under the same gate budget. Because the parameterized circuit $U(\bm{\theta})$ remains unchanged, this approach applies broadly to a wide range of ansatz (e.g., HEA, HVA). We provide a theoretical guarantee that such input-state design improves the ground-state preparation of a given Hamiltonian, and validate this claim numerically on typical quantum many-body models. For the one-dimensional transverse-field Ising model, our method achieves fidelity $0.99$ with only $8$ layers, compared with $12$ layers for a conventional HEA, representing a one-third reduction in layers at the same fidelity. Similarly, for two-dimensional Ising and cluster-Ising models, it reaches the same fidelity $0.99$ with fewer circuit layers than typical HVAs. We further demonstrate consistent improvements in strongly correlated systems, including the Fermi-Hubbard model. These results demonstrate input-state design as a powerful and broadly applicable tool for building VQAs that are both expressive and trainable.

This paper is organized as follows. In Sec.~\ref{pre_knowledge}, we provide the background on VQEs, covering the two widely-used ansatz: the HEA and the HVA. In Sec.~\ref{theory}, we introduce our input-state design method, including algorithm design and a theoretical analysis that prescribes which candidate states to combine. As applications, in Sec.~\ref{state_preparation} we apply our method to typical problems in quantum many-body physics and compare our results with those achieved by HEA and HVA. For completeness, we also include an example on how to implement the encoder layer of our approach in the appendix.

\begin{figure*}[t]
\centering
\includegraphics[width=0.9\linewidth]{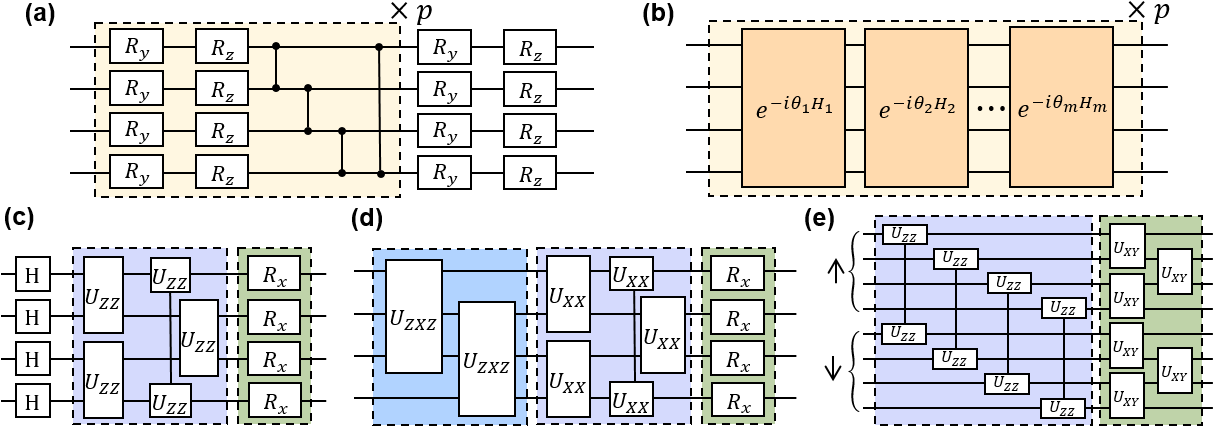}
\caption{%
(a) \textbf{Hardware-efficient ansatz (HEA).} Each layer consists of alternating single-qubit rotations $R_y$ and $R_z$ followed by a chain of $\mathrm{CZ}$ gates. The dashed box indicates one circuit layer, which is repeated $p$ times.  %
(b) \textbf{General Hamiltonian variational ansatz (HVA).} Each layer contains a product of unitaries $\prod_{k=1}^{q} e^{-i \theta_k H_k}$, where $\{H_k\}$ are problem-specific Hamiltonian terms. 
(c-e) \textbf{Examples of HVA design for three different models.} 
(c) \textit{For the transverse-field Ising model.} An initial layer of Hadamard gates $H$ prepares $\lvert+\rangle^{\otimes n}$. $U_{ZZ}(\theta) = e^{-i(\theta/2)\,\sigma_i^z\sigma_j^z}$ represents the two-qubit gate for ZZ interaction, while $R_x(\theta)=e^{-i\theta\,\sigma_i^x}$ represents the single-qubit $X$-rotation. 
(d) \textit{For the cluster-Ising model.} $U_{ZXZ}(\theta)=e^{-i(\theta/2)\,\sigma_i^z\sigma_j^x\sigma_k^z}$ is a three-qubit gate, and $U_{XX}(\theta)=e^{-i(\theta/2)\,\sigma_i^x\sigma_j^x}$ is a two-qubit gate. 
(e) \textit{For the Fermi-Hubbard model.} The upper (lower) register encodes spin-$\uparrow$ (spin-$\downarrow$). On-site interactions between the two spins at site $i$ are implemented as $U_{ZZ}(\theta)$. Hopping terms on odd and even bonds are realized by $U_{XY}(\theta)=e^{-i(\theta/2)\,(\sigma_i^x\sigma_{i+1}^x+\sigma_i^y\sigma_{i+1}^y)}$.
}
\label{fig:ansatz}
\end{figure*}

\section{Variational Quantum Eigensolver}\label{pre_knowledge}
The variational quantum eigensolver (VQE) \cite{McClean_2016,Peruzzo2014,PhysRevA.92.042303} is a special case of VQAs in which the observable $H$ is the Hamiltonian of a physical system. Therefore, the target state becomes the ground state of the Hamiltonian and the optimized cost function $E(\bm{\theta}_{\mathrm{opt}})$ will estimate the ground energy of the system. The resources required for the VQE, and in general for any VQA, can be divided into two categories: (i) quantum resources; and (ii) classical resources. The former can be quantified by the number of layers or the number of two-qubit gates in the circuit. The latter quantifies the resources that one needs for the classical optimization. This is proportional to the number of experiments that one has to perform to accomplish a VQE algorithm. As shown in Ref.~\cite{Lyu2023symmetryenhanced}, the classical resources can be quantified as $C_{\rm R}=N_{\rm I}\times N_{\rm para} $, where $N_{\rm I}$ is the average number of iterations that one has to use to converge the optimization and $N_{\rm para}$ is the number of parameters in the circuit which are optimized. In the following we will discuss some of the typical quantum circuits that are used in VQAs.

\subsection{Hardware-Efficient Ansatz}

The HEA~\cite{Kandala2017} consists of alternating layers of parameterized single-qubit rotations and entangling gates, arranged to match the native connectivity of the quantum hardware. As shown in Fig.~\ref{fig:ansatz}(a), it is defined as $U_{\mathrm{HEA}}(\bm{\theta}) {=} \prod_{l=1}^{p} \left[ U_{\mathrm{entangle}}^{(l)} \cdot \bigotimes_{i=1}^{n} R(\bm{\theta}_{l,i}) \right],$ where $p$ is the number of layers, $n$ is the number of qubits, and $U_{\mathrm{entangle}}^{(l)}$ denotes the $l$-th entangling layer, typically implemented using chains of controlled-$Z$ gates. The single-qubit rotation applied to qubit $i$ in layer $l$ is $R(\bm{\theta}_{l,i}) {=} R_y(\theta^{(y)}_{l,i})\, R_z(\theta^{(z)}_{l,i})$, with $R_y(\theta) {=} e^{-i\theta\sigma^y / 2}$ and $R_z(\theta) {=} e^{-i\theta\sigma^z / 2}$ denoting rotations about the $y$- and $z$-axes, respectively, and $\sigma^y$, $\sigma^z$ being Pauli operators. The main strength of HEA lies in its adaptability to different hardware platforms. However, its lack of problem-specific structure often leads to inefficient parameterization, making it susceptible to optimization challenges such as barren plateaus in deep circuits.

\subsection{Hamiltonian Variational Ansatz}

The HVA is inspired by the quantum approximate optimization algorithm (QAOA) and adiabatic quantum computation~\cite{PhysRevA.92.042303,farhi2014quantumapproximateoptimizationalgorithm,farhi2000quantumcomputationadiabaticevolution}. Unlike HEA, it exploits the structure of the problem Hamiltonian to guide ansatz circuit design and improve optimization efficiency. This approach is particularly suitable for Hamiltonians that can be expressed as a sum of local terms, $H {=} \sum_{i=1}^{q} H_i$, where each term $H_i$ is a Hermitian operator representing a local interaction, and all terms either commute or can be efficiently handled via Trotter-Suzuki decomposition. As shown in Fig.~\ref{fig:ansatz}(b), the HVA is constructed as a sequence of alternating unitary evolutions generated by these terms: $U_{\mathrm{HVA}}(\bm{\theta}) {=} \prod_{l=1}^p \prod_{i=1}^q e^{-i \theta_{l,i} H_i}$, where $p$ is the number of layers (or Trotter steps) and $\theta_{l,i}$ is the variational parameter associated with $H_i$ in layer $l$.

By incorporating the problem Hamiltonian into the ansatz, HVA naturally preserves relevant symmetries and reduces the effective optimization space, often leading to faster convergence and higher accuracy. Compared with HEA, it generally offers superior performance on structured problems, but may require deeper circuits depending on the Hamiltonian~\cite{10.1063/5.0186205}.

\section{Input-State Design}\label{theory}

For a given ansatz circuit $U(\boldsymbol{\theta})$ with a fixed number of layers, applying it to the standard input state $\ket{\Psi_0} = \ket{0}^{\otimes n}$ explores a restricted set of states localized near $\ket{\Psi_0}$ in the Hilbert space. This set is referred to as the reachable set of $\ket{\Psi_0}$ under $U(\bm{\theta})$ (red-shaded region in Fig.~\ref{fig:init}). If the target state $\ket{\Psi_{\mathrm{tar}}}$ lies outside this reachable set, then no optimization can achieve it, regardless of the choice of parameters. To overcome this limitation, we introduce a systematic \emph{input-state design} strategy that constructs a more suitable input state $\ket{\Psi_0}$, thereby modifying the reachable set to include $\ket{\Psi_{\mathrm{tar}}}$. Our approach employs an additional parameterized circuit $V(\bm{\gamma})$, termed the \emph{encoder}, whose role is to prepare a superposition input state $\ket{\Psi_0(\bm\gamma)}=V(\bm{\gamma})\ket{0}^{\otimes n}=\sum_{j=1}^m \alpha_j\,|\psi_j\rangle$, from which the target state is more accessible. Here, $\{|\psi_j\rangle\}$ is a pre-selected subset of computational basis states. The encoder parameters in $V(\bm{\gamma})$ and the ansatz parameters in $U(\bm{\theta})$ are then jointly optimized. A key advantage of this approach is that the encoder is a low-depth circuit, so it only modestly increases the overall circuit depth (see Appendix for details). As illustrated in Fig.~\ref{fig:init}, this design enables the same ansatz $U(\bm{\theta})$ to access an alternative region of the Hilbert space (green-shaded area) that contains the target state, thereby enhancing the expressive power of the original ansatz. Such modification of the reachable set has the potential to improve the ability of the original VQA in approximating the true ground state with higher fidelity. 

To formalize this intuition, we next present a theorem establishing how the achievable ground-state fidelity of a linear combination of orthogonal input states relates to the fidelity achievable from each individual input. This result provides a rigorous foundation for why carefully engineered input states can systematically modify the reachable set and improve performance.

\begin{theorem}\label{theorem1}
Let $\mathcal{A}_m{=}\{\,|\psi_j\rangle\,\}_{j=1}^{m}$ denote the $m$ selected mutually orthogonal states, and $\ket{\Psi}{=}\sum_{j=1}^m \alpha_j \ket{\psi_j}$. Then the fidelity with respect to the target state (which is the true ground state of the Hamiltonian), i.e., $F{=}|\bra{\Psi} \Psi_\mathrm{tar}\rangle|^2$, can be maximized over $\bm{\alpha}$ and satisfies:
\begin{align}
\max_{\bm{\alpha}} F = \sum_{j =1}^m F_j,
\end{align}
where $ F_j {\equiv} |\langle \psi_j | \Psi_{\mathrm tar} \rangle|^2 $, and the optimal parameters $ \bm{\alpha} $ and $ \bm{\beta} $ are collinear, with $ \bm{\beta} {\equiv} (\langle \psi_1 | \Psi_{\mathrm tar} \rangle, \langle \psi_2 | \Psi_{\mathrm tar} \rangle, \dots, \langle \psi_m | \Psi_{\mathrm tar} \rangle)^T $.
\end{theorem}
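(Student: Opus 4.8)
\emph{Proof strategy.} The plan is to recognize the optimization as a direct instance of the Cauchy--Schwarz inequality. First I would translate the statement into the coefficient vectors. Writing $\beta_j \equiv \langle\psi_j|\Psi_{\mathrm{tar}}\rangle$ and collecting the $\alpha_j$ into $\bm{\alpha}$, the orthonormality of the selected computational basis states $\{|\psi_j\rangle\}$ (``mutually orthogonal'' here being understood as orthonormal, so that $|\Psi\rangle$ is normalized iff $\|\bm{\alpha}\|^2 = \sum_{j}|\alpha_j|^2 = 1$) gives $\langle\Psi|\Psi_{\mathrm{tar}}\rangle = \sum_{j=1}^m \alpha_j^*\beta_j = \bm{\alpha}^\dagger\bm{\beta}$, and hence $F = |\bm{\alpha}^\dagger\bm{\beta}|^2$. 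The problem is thereby reduced to maximizing the squared overlap of a unit vector $\bm{\alpha}\in\mathbb{C}^m$ with the fixed vector $\bm{\beta}$.

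Next I would apply Cauchy--Schwarz, $|\bm{\alpha}^\dagger\bm{\beta}|^2 \le \|\bm{\alpha}\|^2\,\|\bm{\beta}\|^2 = \|\bm{\beta}\|^2$, and identify $\|\bm{\beta}\|^2 = \sum_{j=1}^m |\langle\psi_j|\Psi_{\mathrm{tar}}\rangle|^2 = \sum_{j=1}^m F_j$, which already yields the upper bound $\max_{\bm{\alpha}} F \le \sum_j F_j$. To show the bound is attained, I would invoke the equality condition of Cauchy--Schwarz: equality holds precisely when $\bm{\alpha}$ is collinear with $\bm{\beta}$. Choosing $\bm{\alpha} = e^{i\phi}\bm{\beta}/\|\bm{\beta}\|$, which is a unit vector and therefore admissible, and substituting back gives $F = \|\bm{\beta}\|^2 = \sum_j F_j$. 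This simultaneously establishes the claimed equality and the stated collinearity of the optimal $\bm{\alpha}$ with $\bm{\beta}$.

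There is no genuinely hard analytic step here; the content is elementary linear algebra, and the only points requiring a moment of care are (i) making explicit that the basis states are orthonormal, so that the constraint set is the unit sphere and the Pythagorean identity applies, and (ii) the degenerate case $\bm{\beta}=0$ (the target orthogonal to $\mathrm{span}(\mathcal{A}_m)$), where the maximizer is non-unique but the identity still holds trivially as $0=0$. As an alternative presentation, one could instead set up Lagrange multipliers for $|\bm{\alpha}^\dagger\bm{\beta}|^2$ subject to $\|\bm{\alpha}\|^2=1$, arriving at the eigenvalue equation $\bm{\beta}\bm{\beta}^\dagger\bm{\alpha} = \lambda\bm{\alpha}$ whose leading eigenvector is again $\bm{\beta}/\|\bm{\beta}\|$ with eigenvalue $\|\bm{\beta}\|^2$; I would nonetheless favor the Cauchy--Schwarz argument as the cleaner route.
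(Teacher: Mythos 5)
Your proposal is correct and follows essentially the same route as the paper: writing $F=|\bm{\alpha}^\dagger\bm{\beta}|^2$, applying Cauchy--Schwarz with the unit-norm constraint on $\bm{\alpha}$, and using the collinearity equality condition to attain $\max F=\|\bm{\beta}\|^2=\sum_j F_j$. Your added remarks on the degenerate case $\bm{\beta}=0$ and the Lagrange-multiplier alternative are fine but not needed.
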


\begin{proof}
Let $\ket{\Psi} = \sum_{j=1}^m \alpha_j \ket{\psi_j}$ be the trial state. The fidelity is:
\begin{align}
F \equiv |\langle \Psi | \Psi_{\mathrm tar} \rangle|^2
= \left| \sum_{j=1}^m \alpha_j^{\!*} \langle \psi_j | \Psi_{\mathrm tar} \rangle \right|^2
= |\bm{\alpha}^\dagger \bm{\beta}|^2.
\end{align}
By the Cauchy--Schwarz inequality,
\begin{align}
|\bm{\alpha}^\dagger \bm{\beta}|^2 \leq \|\bm{\alpha}\|^2 \,\|\bm{\beta}\|^2,
\end{align}
with equality if and only if $\bm{\alpha}$ and $\bm{\beta}$ are collinear. Given the normalization constraint $\|\bm{\alpha}\|^2 = 1$, the fidelity achieves its maximum, and the maximum value is
\begin{align}
F = \|\bm{\beta}\|^2 = \sum_{j \in \mathcal{A}_m} |\langle \psi_j | \Psi_{\mathrm tar} \rangle|^2 = \sum_{j \in \mathcal{A}_m} F_j.
\end{align}
\end{proof}

\begin{figure}[t]
\centering
\includegraphics[width=0.7\linewidth]{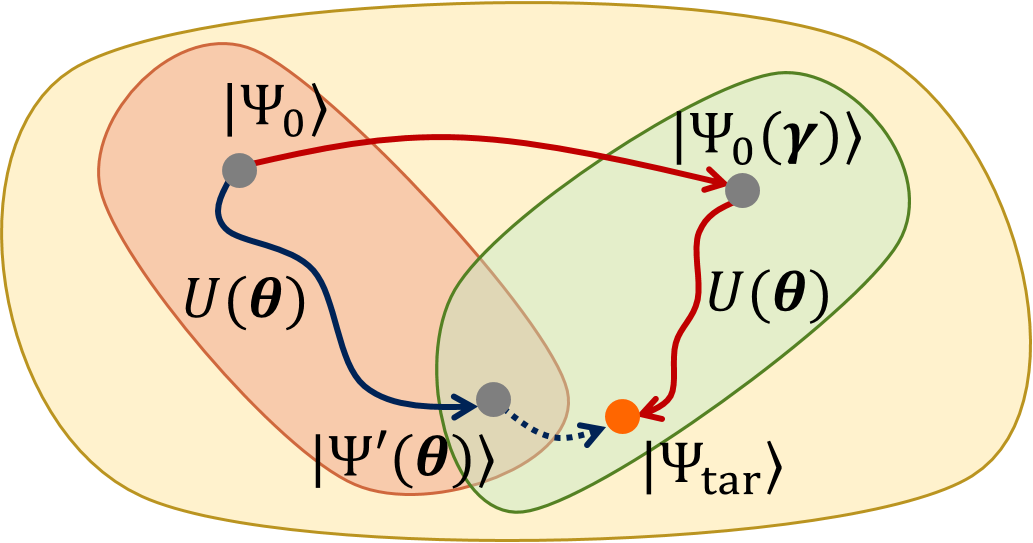}
\caption{Reachable sets modified through input-state design. For a fixed unitary $U(\bm{\theta})$, a simple input state $\ket{\Psi_0}$ induces a reachable set (red-shaded) that excludes the target $\ket{\Psi_{\mathrm{tar}}}$, causing optimization to converge to a suboptimal state $\ket{\Psi'(\bm{\theta})}$ (blue path). By contrast, a designed input state $\ket{\Psi_0(\bm{\gamma})}$, prepared by the encoder $V(\bm{\gamma})$, produces a different reachable set (green-shaded) that contains $\ket{\Psi_{\mathrm{tar}}}$, enabling the same $U(\bm{\theta})$ to reach the target (red path).}
  \label{fig:init}
\end{figure}

Theorem~\ref{theorem1} reveals that when combining multiple orthogonal quantum states $\{\ket{\psi_j}\}$, each state's contribution to the fidelity is independent, and the maximum achievable fidelity equals the sum of the individual fidelity $F_j$. This is because unitary operation preserves the orthogonality of the quantum states which contribute in the linear superposition. This insight underpins the selection strategy adopted later: prioritize candidates with larger $F_j$ so that the encoder-prepared superposition attains maximal overlap with the target within the chosen subspace.

With Theorem~\ref{theorem1}, the next task is to implement an encoder $V(\bm\gamma)$ that prepares the optimal input-state $\ket{\Psi_0(\bm\gamma)}{=}\sum\alpha_j \ket{\psi_j}$ from a simple initial state. Since the result assumes that the candidate states are mutually orthogonal, the set $\{\ket{\psi_j}\}$ is instantiated with the computational basis states $\{\ket{j}\}$, which form an orthonormal set and are hardware-friendly to prepare. We now detail the practical implementation. While the framework is motivated by maximizing fidelity $F$, on hardware fidelity is not directly accessible. We therefore adopt the standard VQE objective and minimize the variational average energy, and our algorithm comprises the following six steps: 

1. The standard ansatz circuit $U(\bm{\theta})$ is trained from the simple initial state $|0\rangle$ by minimizing $E(\bm{\theta}) {=} \langle 0 | U^{\dagger}(\bm{\theta}) H U(\bm{\theta}) | 0 \rangle$. We denote the resulting optimized parameters by $\tilde{\bm{\theta}}_\mathrm{opt}$. The optimization ends when the gradient norm is less than $10^{-3}$. 

2. A total of $M$ computational basis states $\{ |j^{(k)}\rangle \}_{k=1}^M$ are randomly sampled from the full Hilbert space. This heuristic sampling strategy enables efficient exploration of a representative subspace while reducing computational overhead. The state $|0\rangle$ is always included in the sampled set. 

3. For each sampled basis state $|j^{(k)}\rangle$, we estimate the energy expectation value $E_{j^{(k)}} {=} \langle j^{(k)} | U^{\dagger}(\tilde{\bm{\theta}}_\mathrm{opt}) H U(\tilde{\bm{\theta}}_\mathrm{opt}) | j^{(k)} \rangle$. $E_{j^{(k)}}$ is obtained via quantum measurements, where the number of measurements required scales as $N_m = 1/\epsilon^2$ for a target estimation error $\epsilon$, as discussed in~\cite{Wu2025quantum}. 

4. Using the energy estimates $E_{j^{(k)}}$ from step~3, we first build a candidate pool by retaining basis states with $E_{j^{(k)}}< T_e$, where $T_e$ defines the low-energy range for selection. From this pool we randomly select $m$ states, and we always include the reference state $|0\rangle$ to form $\mathcal{A}_m$. Here, we choose $m$ to scale linearly with the system size (for example, $m=6$ for 12 qubits). This design is motivated by Theorem~\ref{theorem1}, which ensures that an appropriate combination of the selected states achieves a large overlap with the target state. Importantly, we do not simply pick the $m$ lowest-energy states, because the smallest expectation values do not guarantee a large overlap with the true ground state.

5. Based on the set of computational basis states $\mathcal{A}_m$ selected in the previous step, the encoder $V(\bm{\gamma})$ is constructed such that $V(\bm{\gamma})|0\rangle {=} \sum_{k=1}^m \alpha_k |j^{(k)}\rangle.$ Following~\cite{9586240}, such an encoder can be implemented efficiently. For the same circuit $U(\bm{\theta})$, the basis set $\mathcal{A}_m$ is fixed, so the encoder $V(\bm{\gamma})$ only needs to be designed once and can then be reused throughout the optimization. A concrete construction is provided in Appendix.

6. Once the encoder layer $V(\bm{\gamma})$ is constructed, the input state of $U(\bm{\theta})$ becomes $\ket{\Psi_0(\bm\gamma)}{=}V(\bm{\gamma})|0\rangle {=} \sum_{k=1}^m \alpha_k |j^{(k)}\rangle$. Subsequently, applying the circuit $U(\bm{\theta})$ yields $|\Psi(\bm{\theta,\bm{\gamma}})\rangle {=} U(\bm{\theta})V(\bm{\gamma})|0\rangle$, as illustrated in Fig.~\ref{fig:circuit_scheme}. We then jointly optimize $\bm{\theta}$ and $\bm{\gamma}$ to minimize $E(\bm{\theta},\bm{\gamma}) {=} \langle 0|V^\dagger(\bm{\gamma})U^\dagger(\bm{\theta})HU(\bm{\theta})V(\bm{\gamma})|0\rangle$. Here, the optimization parameters are initialized with $\tilde{\bm{\theta}}_\mathrm{opt}$, obtained from the pre-trained ansatz, and random initial encoder parameters $\bm{\gamma}$. To control classical overhead, we restrict this optimization to a modest number of iterations $T$ (typically $T=200$), though this can be increased for more difficult problems. The full procedure is summarized in Algorithm~\ref{alg:enhanced_vqe}.

\begin{figure*}[t]
\centering
\includegraphics[width=\linewidth]{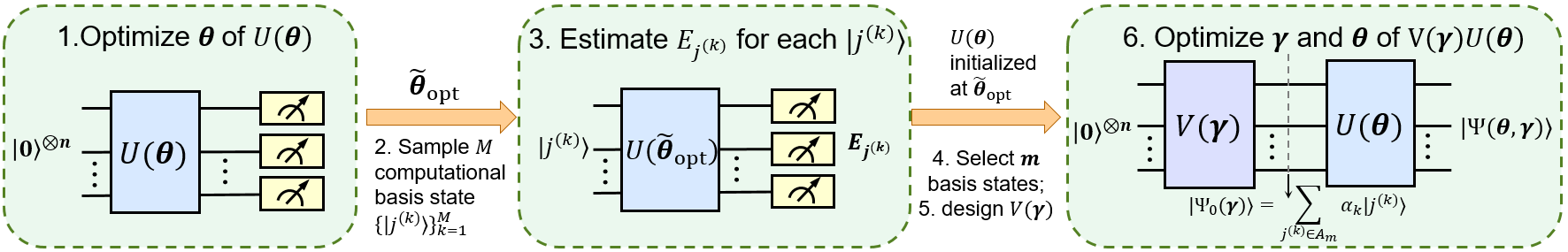}
\caption{%
\textbf{Workflow of the Input-state Design Algorithm.} 
(1) The circuit $U(\bm{\theta})$ is first pre-trained on the initial state $\ket{0}^{\otimes n}$ to obtain the optimized parameters $\tilde{\bm{\theta}}_{\mathrm{opt}}$. 
(2) With $U(\tilde{\bm{\theta}}_{\mathrm{opt}})$ held fixed, $M$ computational basis states $\{\ket{j}\}$ are randomly sampled. 
(3) For each $\ket{j}$, $E_j$ is evaluated under the pre-trained circuit. 
(4) According to a selection rule, $m$ promising candidates (including $\ket{0}^{\otimes n}$) out of the $M$ basis states are chosen to form the set $\mathcal{A}_m$. 
(5) An encoder $V(\bm{\gamma})$ is then constructed to prepare a superposition of these $m$ states. 
(6) The encoder $V(\bm{\gamma})$ and the original ansatz $U(\bm{\theta})$ are jointly optimized to minimize the energy, with the ansatz parameters initialized at $\bm{\theta}=\tilde{\bm{\theta}}_{\mathrm{opt}}$.
}
\label{fig:circuit_scheme}
\end{figure*}

To illustrate the practical significance of this result, we consider a $U(\bm\theta)$ that, after optimization, plateaus at a fidelity around $0.95$, beyond which increasing circuit depth or training iterations yields little improvement due to ansatz limitations or barren plateaus. However, according to Theorem~\ref{theorem1}, by constructing an appropriate combination of multiple basis states, it is possible to surpass this fidelity barrier and achieve significantly higher overlaps with the target state. In particular, in the limiting case where $m = 2^n$ and $\{\ket{\psi_j}\}$ span the entire Hilbert space, any target state $\ket{\Psi_{\mathrm tar}}$ can be precisely reconstructed, yielding perfect fidelity $F = 1$. Of course, constructing such a complete encoder requires exponential resources. Our method circumvents this overhead by selecting only a small subset of basis states with the highest $F_j$, thereby achieving a scalable and resource-efficient enhancement of fidelity. As we demonstrate in the following simulations, this theoretical insight is well supported by empirical results, confirming the effectiveness of our approach.

\begin{algorithm}[H]
\caption{Input-state Design Algorithm}
\label{alg:enhanced_vqe}
\raggedright
\begin{algorithmic}
\Require Hamiltonian $H$, initial state $\ket{0}$, initial unitary $U(\bm{\theta})$, number of sampled basis states $M$, number of selected basis states $m$.
\Ensure Optimized parameters $\bm{\theta}$ and $\bm{\gamma}$ minimizing the energy $E$.

\State \text{1:} Optimize $\bm{\theta}$ of $U(\bm{\theta})$ to obtain $\tilde{\bm{\theta}}_\mathrm{opt}$:
\State \quad $\tilde{\bm{\theta}}_\mathrm{opt}=\arg\min_{\bm{\theta}} 
       \langle 0|U^{\dagger}(\bm{\theta})HU(\bm{\theta})|0\rangle$.

\State \text{2:} Uniformly sample $M$ computational basis states 
\State \quad $\{\ket{j^{(k)}}\},k=1,2,\dots, M$.

\State \text{3:} For each sampled $\ket{j^{(k)}}$, estimate 
\State \quad $E_{j^{(k)}}=\bra{j^{(k)}}U^{\dagger}(\tilde{\bm{\theta}}_\mathrm{opt})HU(\tilde{\bm{\theta}}_\mathrm{opt})
        \ket{j^{(k)}}$.

\State \text{4:} Select $m$ states among those with $E_{j^{(k)}}<T_e$ to form 
\State \quad $\mathcal{A}_m=\{\ket{0}\}\cup\{\ket{j^{(k)}}\}_{k=1}^{m-1}$.

\State \text{5:} For given basis set $\mathcal{A}_m$, we can construct encoder $V(\bm{\gamma})$ 
\State \quad such that $V(\bm{\gamma})\ket{0} = \sum_{k=1}^m \alpha_k \ket{j^{(k)}}.$

\Statex \text{6:} \textbf{while} Iteration $t \leq T$ \textbf{do}
\State \quad Optimize $(\bm{\theta}, \bm{\gamma})$ to obtain
\State \quad $(\bm{\theta}_{\mathrm{opt}},\bm{\gamma}_{\mathrm{opt}} )= \arg\min_{\bm{\theta},\bm{\gamma}}
   \bra{0}V^{\dagger}(\bm{\gamma})U^{\dagger}(\bm{\theta})
     H U(\bm{\theta})V(\bm{\gamma})\ket{0}$,
\State \quad $U(\bm{\theta})$ initialized at $\tilde{\bm{\theta}}_\mathrm{opt}$.

\State \textbf{end while}

\State \Return Optimized parameters $\bm{\theta}_{\mathrm{opt}}$ and $\bm{\gamma}_{\mathrm{opt}}$.
\end{algorithmic}
\end{algorithm}


\section{Input-State Design in Quantum Many-Body Problems}\label{state_preparation}

As applications, we present numerical simulations of ground state preparation for the Ising model, the cluster-Ising model, and the Fermi-Hubbard model to demonstrate the efficiency of input-state design. Rather than increasing circuit depth or changing the ansatz, we design the input state $\ket{\Psi_0(\bm{\gamma})}$ so that the reachable set of the same parameterized circuit $U(\bm{\theta})$ is shifted to include the target ground state. This reachable set reconfiguration improves both expressivity and trainability: under fixed depth and fixed circuit structure (HEA or HVA), the designed inputs yield systematically higher fidelity and more accurate ground energy estimates than standard choices. These results indicate that controlling $\ket{\Psi_0(\bm{\gamma})}$ is a practical lever for overcoming expressivity bottlenecks without deepening circuits, a point we substantiate across the three models above.

\subsection{Transverse-Field Ising Model}

\begin{figure*}[tp]
  \centering
  \includegraphics[width=\linewidth]{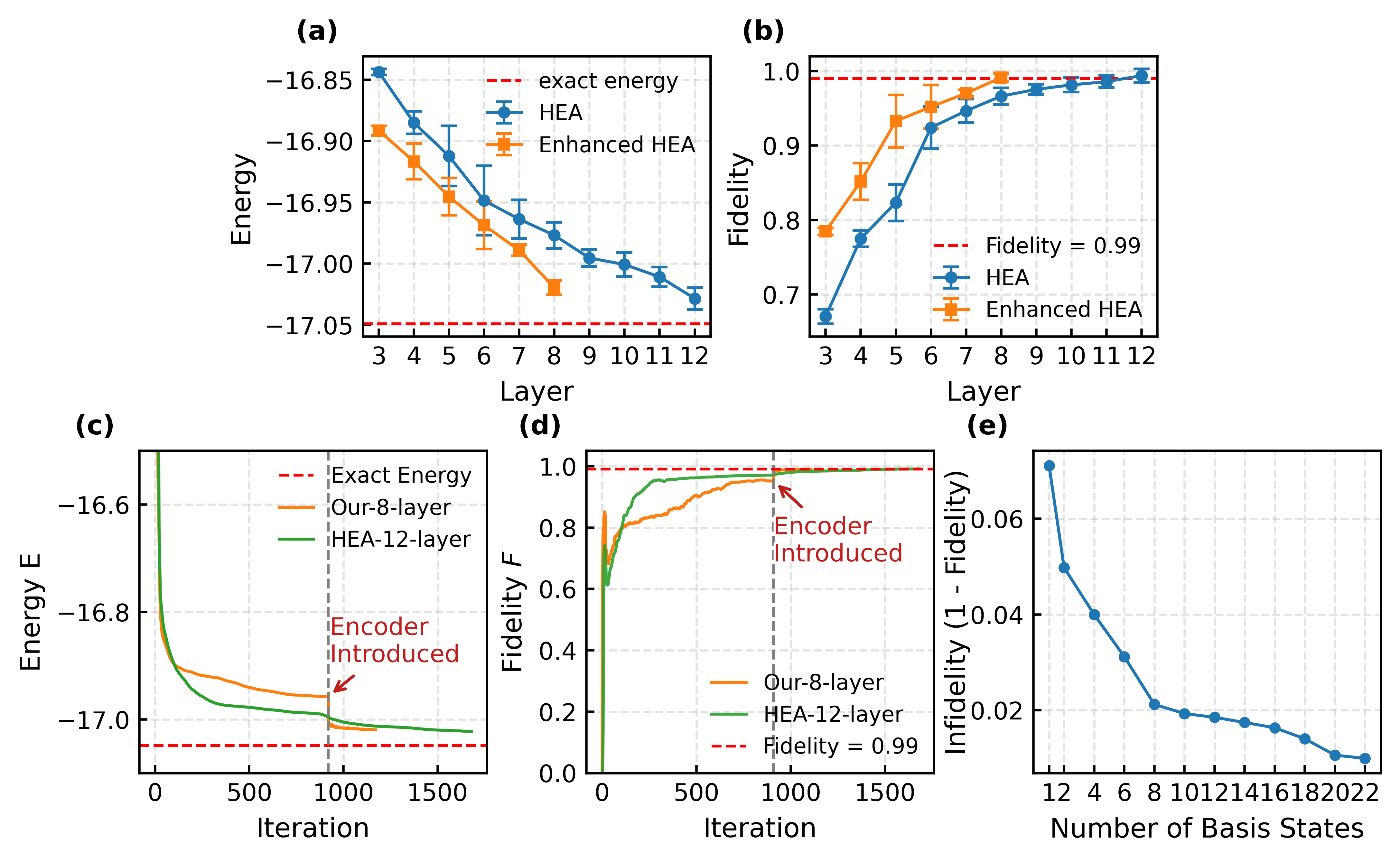}
\caption{
Simulation results for the one-dimensional transverse-field Ising model using conventional HEA and our input-state design strategy (enhanced HEA). (a)-(b) Performance comparison of variational circuits with different layer depths, showing (a) ground energy and (b) fidelity. The enhanced circuit (orange) reaches an average fidelity of $0.99$ with only 8 layers, while the conventional HEA circuit (blue) requires $12$ layers to achieve the same accuracy. (c)-(d) Training trajectories for the 8-layer enhanced HEA are compared with a 12-layer conventional HEA baseline. The gray dashed line marks the iteration at which the encoder is introduced. After this point, the energy quickly approaches the exact ground energy and the fidelity continues rising. (e) Infidelity (i.e., $1-F$) as a function of the number of computational-basis states selected for the encoder, based on a 5-layer HEA circuit. As $m$ increases, the infidelity decreases approximately exponentially, showing that as few as $m = 6$ carefully chosen states are sufficient for high-fidelity state preparation, enabling expressive yet resource-efficient initialization.}
\label{fig:ising}
\end{figure*}

We consider the transverse-field Ising model (TFIM) with periodic boundary conditions,
\begin{equation}
H = -J  \sum_{\langle i,j \rangle} \sigma_i^z \sigma_j^z - h \sum_{i} \sigma_i^x ,
\end{equation}
where $\langle i,j \rangle$ denotes nearest neighbors and $\sigma_i^{x,z}$ are Pauli operators. The parameter $J$ sets the spin-spin coupling: $J>0$ corresponds to ferromagnetic and $J<0$ to antiferromagnetic interactions. The transverse field strength $h$ controls the external field along $x$. In 1D the model is exactly solvable via, e.g., the Jordan--Wigner transformation~\cite{LIEB1961407}, whereas in 2D increased connectivity renders the problem substantially more challenging, with progress primarily through numerical methods~\cite{slattery2021quantumcircuitstwodimensionalisometric,PhysRevB.110.155128}.

\paragraph{1D case.}  
As a first example, we consider the antiferromagnetic regime with $J=-1$ and $h=-1.2$. The baseline circuit $U(\bm{\theta})$ uses $p$ layers HEA circuit, see Fig.~\ref{fig:ansatz}(a). The original input state is taken to be $\ket{0}^{\otimes n}$. Our input-state design method keeps the total depth $p$ fixed and replaces one HEA layer with an input-state encoder $V(\bm{\gamma})$. The encoder is built from $m=6$ computational-basis states selected from $M=2000$ candidates and its gate cost comparable to a single HEA layer. For a 12-qubit system, one HEA layer requires about $60$ basic gates; with $m=6$, the encoder uses $\sim 40$-$60$ gates depending on the chosen basis states. Figs.~\ref{fig:ising}(a)-(b) report the ground energy and fidelity versus depth, with each point averaged over $100$ random initializations. At matched depth and circuit structure, input-state design method yields uniformly higher fidelity than the HEA baseline. In particular, our method reaches $0.99$ fidelity at $8$ layers (with $112$ CNOT gates), whereas the standard HEA requires $12$ layers (with $144$ CNOT gates), a $\sim 33\%$ reduction in depth.

\paragraph{Comparing performance with the conventional HEA.}  

We next compare the performance of the two methods at fixed circuit depths. Our method is tested at total depths of $8$ (one encoder layer plus $7$ HEA layers). For reference, we include a standard HEA baseline with $12$ layers, the minimum needed to reach $0.99$ fidelity. Our method training proceeds in two stages: first, only the HEA parameters are optimized until the gradient norm drops below $10^{-3}$, signaling barren plateau onset. Then the encoder is introduced, and joint optimization of both $\bm{\theta}$ and $\bm{\gamma}$ runs for $200$ iterations. Figs.~\ref{fig:ising}(c)-(d) show this handoff (dashed line), after which the energy rapidly approaches the exact ground energy and the fidelity increases. With only $8$ layers, the enhanced circuit achieves a fidelity of $0.99$, whereas the conventional HEA requires $12$ layers.

We also compare the quantum and classical resources (Table~\ref{tab:resources}), using the metrics defined in Sec.~\ref{pre_knowledge}. The enhanced circuit reaches target fidelity with $8$ layers (containing $112$ CNOT gates) and $1100$ optimization steps, while HEA needs $12$ layers ($144$ CNOT gates) and $1500$ steps. We find that the input-state design method reduces both gate count and optimization effort. Thus, the encoder yields significant efficiency gains in both quantum and classical resources.


We further investigate how performance depends on $m$, the number of selected basis states. Fig.~\ref{fig:ising}(e) shows that larger $m$ improves final fidelity but with diminishing returns. Substantial gains appear up to $m \approx 6\text{-}8$, after which improvements saturate. In practice, we set $m$ by accounting for the gate cost of implementing the encoder $V(\bm\gamma)$: we choose $m$ so that the additional gates for $V(\bm\gamma)$ are comparable to, or below, the cost of one layer of the baseline ansatz. In this case, one HEA layer uses $60$ gates, so we take $m=6$, which requires only $40\text{-}60$ gates to realize $V(\bm\gamma)$. Hence, moderate $m$ captures most of the benefit while avoiding unnecessary overhead.


\begin{figure*}[t]
\centering
\includegraphics[width=\linewidth]{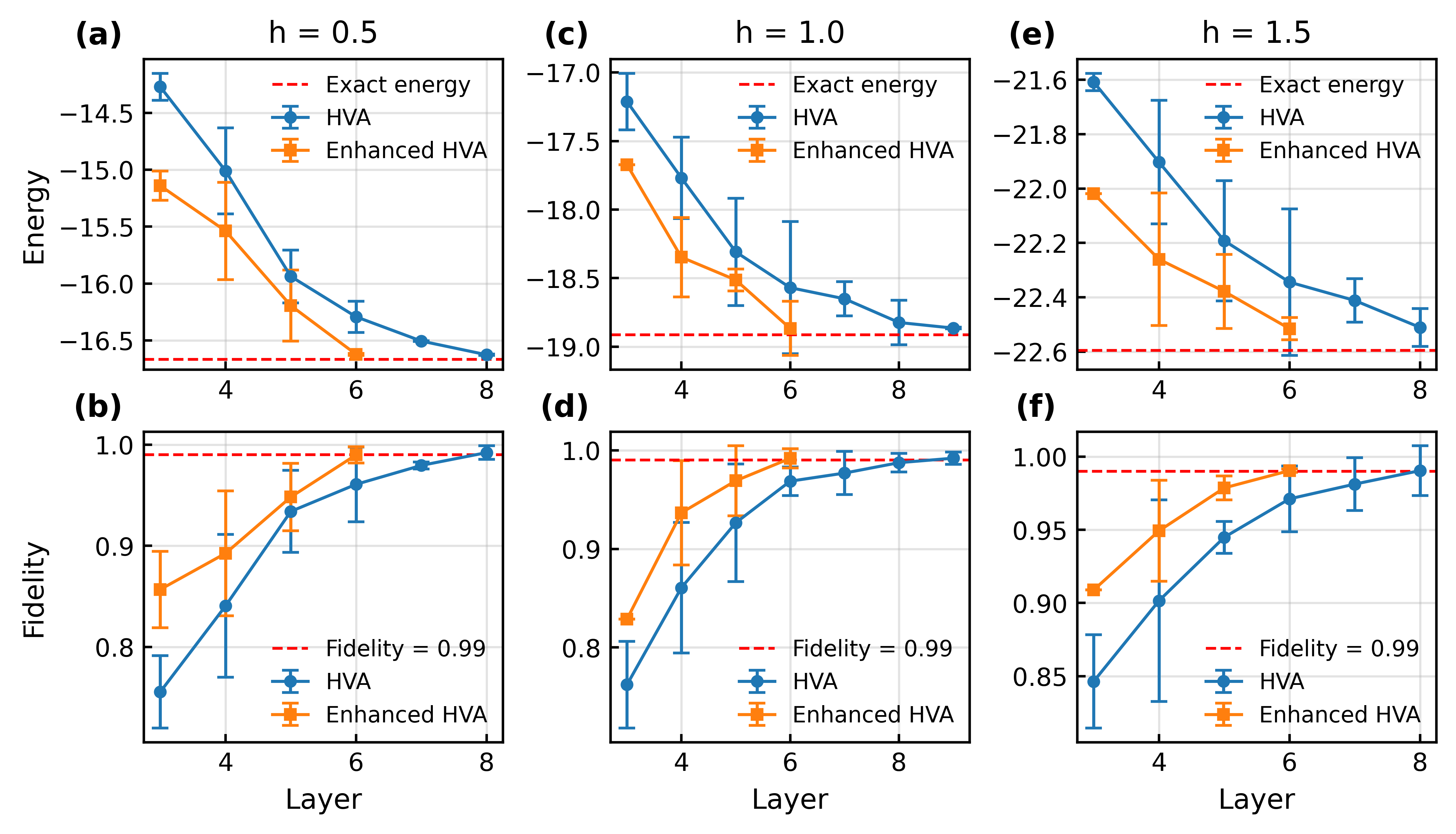}
\caption{
Simulation results for the 12-qubit 2D Ising model at $h = 0.5$, $1$, and $1.5$. For each field strength $h$, the upper panel shows the ground energy as a function of circuit depth $p$, and the lower panel reports the corresponding fidelity to the exact ground state. The blue curves correspond to the conventional HVA, and the orange curves correspond to the input-state design (enhanced HVA). Each marker represents the mean over $100$ random initializations. Across all three values of $h$, the input-state design consistently achieves lower variational energies and higher fidelities under the same depth, and it reaches the $0.99$ fidelity threshold with fewer layers than the baseline.
}
\label{Ising_HVA}
\end{figure*}

\paragraph{2D case.}  
As the second example, we apply our method to the 2D TFIM with periodic boundary conditions, using the HVA in Fig.~\ref{fig:ansatz}(c). The original input state is $\ket{+}^{\otimes n}$. Each HVA layer is $U_{\mathrm{HVA}}(\bm\theta) {=} e^{-i\theta_1 H_{ZZ}^{\mathrm{(odd)}}} \, e^{-i\theta_2 H_{ZZ}^{\mathrm{(even)}}} \, e^{-i\theta_3 H_X}$, with $H_{ZZ}^{\mathrm{(odd)/(even)}}{=} {-}\sum_{\langle i,j\rangle} \sigma_i^z\sigma_j^z$ and $H_X {=} {-}\sum_i \sigma_i^x$. Note that in 2D we traverse both horizontal and vertical couplings. 

We simulate three field strengths, $h \in \{0.5, 1.0, 1.5\}$, and enhance the HVA by adding an encoder constructed from $m=8$ basis states selected from $M=2000$ candidates. Each HVA layer uses $\sim 84$ gates (including $48$ two-qubit gates and $36$ single-qubit gates), so the encoder’s cost remains comparable to one HVA layer. For each configuration, $100$ random initializations are performed. As shown in Fig.~\ref{Ising_HVA}, the enhanced HVA consistently outperforms the conventional version, achieving lower variational energy and higher fidelity at lower depths across all values of $h$. 

Together, the 1D and 2D results demonstrate that our input-state design improves both efficiency and accuracy across different ansatz, establishing it as a broadly applicable strategy for overcoming expressivity bottlenecks.

\subsection{Cluster-Ising Model}

\begin{figure}[t]
\centering
\includegraphics[width=0.9\linewidth]{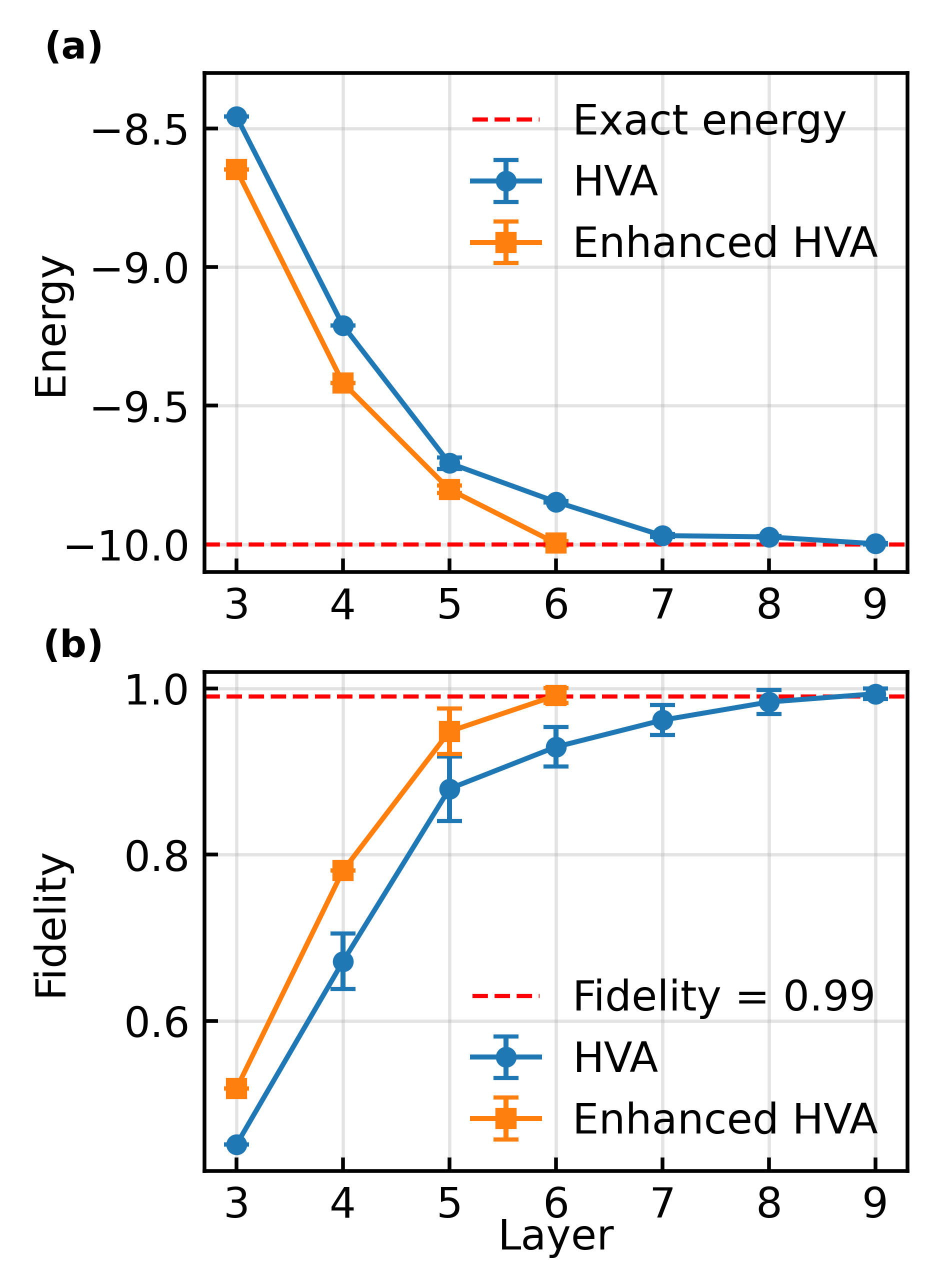}
\caption{
Simulation results for the 12-qubit cluster-Ising model at $h = 0.5$, $1.0$, and $1.5$. The top and bottom rows respectively show the ground energy and fidelity as functions of circuit depth $p$. The conventional HVA (blue) is compared with an enhanced HVA with the input-state design (orange). Each data point represents the average over $100$ random initializations. Across all values of $h$, the input-state design consistently achieves lower variational energy and higher fidelity at the same circuit depths.
}
\label{Cluster_Ising_HVA}
\end{figure}

In addition to the models discussed above, we also consider other quantum systems to further evaluate the generality of our method. Next, we consider the cluster-Ising model described by the Hamiltonian
\begin{equation}
\begin{aligned}
H &= -J \sum_{i=1}^{N-2} \sigma_{i}^{z}\,\sigma_{i+1}^{x}\,\sigma_{i+2}^{z} \\
  &\qquad - h_{1} \sum_{i=1}^{N-1} \sigma_{i}^{x}\,\sigma_{i+1}^{x}
           - h_{2} \sum_{i=1}^{N} \sigma_{i}^{x}.
\end{aligned}
\end{equation}
Here, $J$ sets the strength of the three-body $ZXZ$ interaction; $h_1$ controls nearest-neighbor $XX$ coupling; and $h_2$ is a uniform transverse field along $x$. For $h_1=h_2=0$, this reduces to the 1D cluster model with the cluster state as its unique ground state, a symmetry-protected topological phase protected by a $Z_2\times Z_2$ symmetry~\cite{PhysRevLett.109.050402}. Turning on $h_1$ or $h_2$ breaks the protecting $Z_2\times Z_2$ symmetry and drives the system toward a non-topological phase.

To approximate the ground state of this Hamiltonian for general $(J,h_1,h_2)$, we use HVA. The Hamiltonian is first decomposed into groups of mutually commuting terms, each group generating a parameterized unitary operator:  
$H {=} J H_{ZXZ} + h_1 \left( H_{XX}^{\mathrm{(odd)}} + H_{XX}^{\mathrm{(even)}} \right) + h_2 H_{X}$,  
where 
$H_{ZXZ} {=} -\sum_{i=1}^{N-2} \sigma_{i}^{z}\sigma_{i+1}^{x}\sigma_{i+2}^{z}$, 
$H_{XX}^{\mathrm{(odd)}} {=} -\sum_{\substack{i=1 }}^{N-1} \sigma_{i}^{x}\sigma_{i+1}^{x}$, 
$H_{XX}^{\mathrm{(even)}} {=} -\sum_{\substack{i=2 }}^{N-2} \sigma_{i}^{x}\sigma_{i+1}^{x}$, 
and $H_{X} {=} -\sum_{i=1}^{N} \sigma_{i}^{x}$. 
As shown in Fig.~\ref{fig:ansatz}(d), each HVA layer $l$ consists of sequentially applying 
$e^{-i\theta^{(l)}_{1} H_{ZXZ}}$, 
$e^{-i\theta^{(l)}_{2} H_{XX}^{\mathrm{(odd)}}}$, 
$e^{-i\theta^{(l)}_{3} H_{XX}^{\mathrm{(even)}}}$, 
and $e^{-i\theta^{(l)}_{4} H_{X}}$, 
with independent variational parameters $\theta^{(l)}_{1\text{-}4}$ for each layer. For the three-body term $e^{-i\theta \sigma_{i}^{z}\sigma_{i+1}^{x}\sigma_{i+2}^{z} } $ in $H_{ZXZ}$, we use a native two-qubit gate decomposition compatible with typical hardware connectivity: 
$e^{-i\theta \sigma_{i}^{z}\sigma_{i+1}^{x}\sigma_{i+2}^{z} } =\mathrm{CZ}_{(i,i+1)}\, \mathrm{CZ}_{(i+2,i+1)}\, e^{-i\theta X_{i+1}}\, \mathrm{CZ}_{(i,i+1)}\, \mathrm{CZ}_{(i+2,i+1)}$, 
where $\mathrm{CZ}_{(a,b)}$ denotes a controlled-$Z$ gate between qubits $a$ and $b$.

We next evaluate the performance of the HVA in approximating the ground state of $H$ for the case $h_1 = h_2 = 0.1$. In our simulations, the original input state is $\ket{\psi_0} {=} \ket{+}^{\otimes n}$. From a pool of $M=2000$ candidate basis states we select $m=12$ to construct the enhanced HVA, choosing $m$ so that the encoder’s gate count remains comparable to one HVA layer (including $62$ two-qubit gates and $103$ single-qubit gates). For each depth, we perform $100$ independent optimizations from random initial parameters. As shown in Fig.~\ref{Cluster_Ising_HVA}, the enhanced method reaches fidelity $0.99$ with $p=6$ layers (requiring $450$ two-qubit gates in total), whereas the conventional HVA requires $p=9$ (requiring $558$ two-qubit gates). Moreover, a comparison between classical resources shows that the input-state design method requires $C_{\rm R}=54550$ while the conventional HVA algorithm demands $C_{\rm R}=118800$. This clearly shows the superiority of the input-state design approach over conventional methods.

\subsection{Fermi-Hubbard Model}
\begin{figure}[t]
\centering
\includegraphics[width=0.7\linewidth]{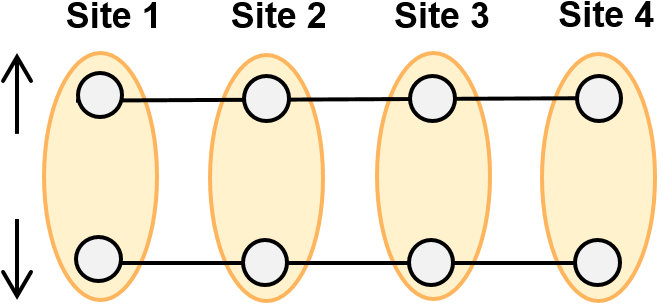}
\caption{
Fermi-Hubbard model for a $1\times4$ chain. Each site hosts spin-$\uparrow$ and spin-$\downarrow$ orbitals (gray). Shaded ovals indicate the on-site interaction $U$ between the two spins on the same site.}

\label{Hubbard_site}
\end{figure}

\begin{figure*}[t]
\centering
\includegraphics[width=\linewidth]{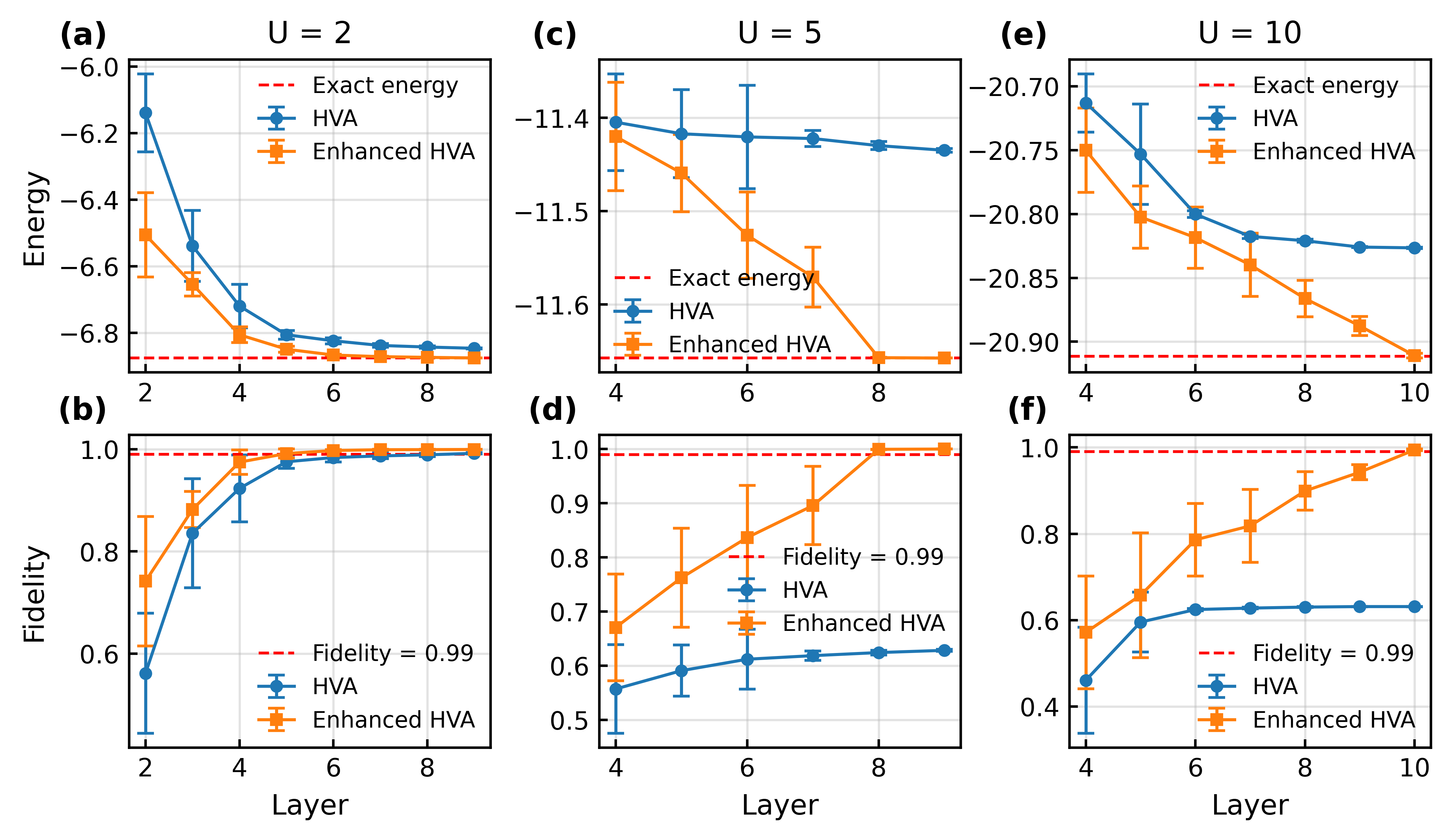}
\caption{
Simulation results for the 4-site (8-qubit) one-dimensional Hubbard model at half filling, for interaction strengths $U = 2$, $5$, and $10$. Each data point represents the average over 100 random initializations. The baseline (blue) uses HVA initialized using a single bitstring, while our input-state design (enhanced HVA, orange) employs a parameterized encoder to prepare a superposition input. At matched circuit depth, the encoder consistently achieves lower variational energies and higher fidelities, with the largest gaps at $U=5$ and $U=10$ where the baseline stagnates while our method continues to improve.
}
\label{Hubbard}
\end{figure*}

To further demonstrate the generality of our input state design strategy, we consider a third benchmark: the 1D Fermi-Hubbard model at half-filling, a paradigmatic example of strongly correlated quantum systems. This model serves as a stringent testbed for evaluating whether our method can generate expressive and physically meaningful input state that enhance ground-state fidelity and optimization performance.
The Hubbard model with open boundary conditions is described by the Hamiltonian
\begin{align}
H &= -t \sum_{i,\,s} \left( c^{\dagger}_{i s} c_{i+1,\,s} + \text{H.c.} \right) \nonumber \\
&\quad + U \sum_{i} \left( n_{i\uparrow} - \frac{1}{2} \right)\!\left( n_{i\downarrow} - \frac{1}{2} \right),
\label{eq:hubbard}
\end{align}
where $c_{i s}^\dagger$ ($c_{i s}$) creates (annihilates) a fermion with spin $s \in \{\uparrow,\downarrow\}$ at the site $i$, and $n_{i s} = c_{i s}^\dagger c_{i s}$ is the corresponding number operator. The hopping amplitude $t$ sets the energy scale and $U$ denotes the on-site Coulomb interaction.

To simulate this model on qubits, we apply the Jordan--Wigner (JW) transformation to map fermionic operators to qubit operators. Since each site hosts two spin species, a system with $N$ sites is encoded into $2N$ qubits. Specifically, we assign qubit $i$ to the spin-$\uparrow$ fermion at site $i$, and qubit $i{+}N$ to the spin-$\downarrow$ fermion at the same site. Under the JW transformation, the fermionic operators are mapped as:
\begin{align}
c_{i}^\dagger &= \Big(\bigotimes_{j < i} \sigma_j^{z}\Big)\,\frac{\sigma_i^{x} - i\,\sigma_i^{y}}{2},\\
c_{i} &= \Big(\bigotimes_{j < i} \sigma_j^{z}\Big)\,\frac{\sigma_i^{x} + i\,\sigma_i^{y}}{2},\\
n_{i s} &= c^{\dagger}_{i s} c_{i s} = \frac{1 - \sigma_i^{z}}{2},
\end{align}
where $\sigma_j^{x}, \sigma_j^{y}, \sigma_j^{z}$ are Pauli matrices acting on qubit $j$. The corresponding qubit Hamiltonian turns out to be
\begin{equation}
\label{eq:JW_Hubbard_qubit}
\begin{aligned}
H
&= -\frac{t}{2}\sum_{i=1}^{N-1}\!\Big(
  \sigma_i^{x}\sigma_{i+1}^{x} + \sigma_i^{y}\sigma_{i+1}^{y}
+ \sigma_{i+N}^{x}\sigma_{i+1+N}^{x}  \\&\quad +\sigma_{i+N}^{y}\sigma_{i+1+N}^{y}
\Big) + \frac{U}{4}\sum_{i=1}^{N} \sigma_i^{z}\sigma_{i+N}^{z}\, .
\end{aligned}
\end{equation}
We choose HVA as the variational ansatz for this model. Specifically, each layer applies $\mathcal{U}_l(\bm{\theta}) {=} e^{-i \theta^{U}_{l} H_{U}} e^{-i \theta^{(1)}_{l} H_{t}^{(1)}} e^{-i \theta^{(2)}_{l} H_{t}^{(2)}}$ with on-site interaction $H_{U}{=}U\sum_{i}(n_{i\uparrow}{-}\tfrac12)(n_{i\downarrow}{-}\tfrac12)$ and hopping interactions $H_{t}^{(1)/(2)}{=}\sum_{i\in \mathrm{odd/even},\,s}(c^{\dagger}_{i s}c_{i+1,s}+\mathrm{H.c.})$.
We implement them via $U_{ZZ}(\theta){=}e^{-i(\theta/4)\sigma^{z}_{i}\sigma^{z}_{i+N}}$ and 
$U_{XY}(\theta){=}e^{-i \theta/2 (\sigma^{x}_{i}\sigma^{x}_{i+1}+\sigma^{y}_{i}\sigma^{y}_{i+1})}$. The circuit is shown in Fig.~\ref{fig:ansatz}(e).

\begin{table*}[t]
\caption{Quantum and classical resources required to reach $F=0.99$ across three models. $N_I$ denotes the number of optimization iterations, $C_R$ denotes the cumulative optimization cost.}
\label{tab:resources}
\begin{ruledtabular}
\renewcommand{\arraystretch}{1.5}
\begin{tabular}{l l c c c c c c c c}
      &               & \multicolumn{2}{c}{\textbf{1D Ising}} 
                       & \multicolumn{2}{c}{\textbf{2D Ising}} 
                       & \multicolumn{2}{c}{\textbf{Cluster-Ising}} 
                       & \multicolumn{2}{c}{\textbf{Fermi-Hubbard}} \\
      &               & Our Method & HEA 
                       & Our Method & HVA 
                       & Our Method & HVA 
                       & Our Method & HVA \\
\colrule
\multirow{2}{*}{\shortstack[l]{Quantum\\Resources}}
      & Layers        & 8          & 12 
                       & 6          & 8 
                       & 6          & 9 
                       & 5          & 9 \\
      & 2-Qubit Gates & $\sim 112$ & 144 
                       & $\sim 300$ & 432 
                       & $\sim 450$ & 558 
                       & $\sim 128$ & 252 \\
\colrule
\multirow{2}{*}{\shortstack[l]{Classical\\Resources}}
      & $N_I$         & 1100       & 1500 
                       & 160        & 120 
                       & 270        & 400 
                       & 500        & 500 \\
      & $C_R$         & 189600     & 432000 
                       & 30800      & 38880 
                       & 54550      & 118800 
                       & 21600      & 45000 \\
\end{tabular}
\end{ruledtabular}
\end{table*}

The choice of the input state critically affects VQAs, especially HVA, where optimization is sensitive to the structure of the reference states. For the Fermi-Hubbard model, common input-state choices include the free-fermion ground state (i.e., the $U=0$ solution) and mean-field Hartree-Fock states~\cite{Romero_2019, Bartlett1989133, Taube2006New}. While both approaches have been widely used, they present distinct trade-offs in terms of symmetry, expressivity, and circuit cost~\cite{PhysRevResearch.4.023190}. Note that the Fermi-Hubbard Hamiltonian in Eq.~(\ref{eq:hubbard}) conserves the number of particles with a given spin, namely $[H,n^{\rm tot}_{\uparrow}]{=}[H,n^{\rm tot}_{\downarrow}]{=}0$, where $n^{\rm tot}_{\uparrow,\downarrow}=\sum_i n_{i\uparrow,\downarrow}$. Therefore, one can use a quantum circuit that preserves such symmetry to enhance the performance of the VQE~\cite{Lyu2023symmetryenhanced}. In this case, if the input state has a given number of particles with spin $\uparrow$ (or spin $\downarrow$) then the output of the circuit also automatically remains in such subspace. In spin language, it translates to $[H,S_z]=0$, where $S_z=\tfrac{1}{2}\sum_i \sigma_i^z$ and the input state has to have an equal number of $\ket{0}$ and $\ket{1}$ for an even system size $N$. 
For instance, for a four-site half-filled chain, which is mapped to $8$ qubits, we use the initial state $\ket{11000011}$ to start the algorithm. Building on this reference state, our method constructs a superposition of selected low-energy basis states using an encoder circuit.

We evaluate this input-state design strategy on a 4-site Fermi-Hubbard model (Fig.~\ref{Hubbard_site}) with open boundary conditions at half-filling for $U = 2$, $5$, and $10$. In each run we sample \(M=70\) computational-basis states within the fixed-particle-number sector (4 fermions), select $m=4$, and build the encoder from them. For each circuit depth we average fidelity over $100$ random initializations. The results are shown in Fig.~\ref{Hubbard}. In Figs.~\ref{Hubbard}(a)-(b), we plot the obtainable average energy as well as the fidelity versus the number of circuit layers for the case of $U = 2$ (i.e., weakly correlated regime). The bitstring state provides a good starting point: a depth-$9$ HVA circuit initialized from it achieves a fidelity of $0.99$. In contrast, our method, which incorporates an encoder layer, reaches fidelity $0.99$ using only $5$ layers. As $U$ increases, the enhanced performance of the input-state design algorithm becomes more pronounced. This is because by increasing $U$ the energy gap between the two lowest eigenstates becomes very small and low-lying excitations proliferate, making the optimization highly sensitive to initialization. As shown in Figs.~\ref{Hubbard}(c)-(d), for the case of $U=5$ (i.e., medium interaction limit), a single bitstring input faces barren-plateau-like behavior, with fidelity saturating around $0.6$. Introducing our encoder layer markedly improves performance at the same circuit depth, pushing the fidelity above $0.99$. In Figs.~\ref{Hubbard}(e)-(f) we repeat the analysis for the case of $U=10$ (i.e., strong interaction limit), in which again our performance overcomes the training issue. In particular, while the conventional method only achieves fidelity $0.6$, our input-state design can reach fidelity of $0.99$. This clearly shows that the encoder enhances the expressivity and physical relevance of the initialization, thereby improving the overall performance of the variational ansatz.

To summarize our results for the three examples, in Table~\ref{tab:resources}, we compare both quantum and classical resources between conventional approaches and our method. As the data clearly demonstrate, our input-state design approach outperforms the conventional algorithms in terms of both quantum and classical resources.

\section{Conclusion}\label{conclusion}

In this work, we have developed input-state design as a powerful complement to circuit design for VQAs. By introducing a measurement-based encoder, we prepare an input state whose reachable set under a fixed-depth circuit is more likely to include the target state than the conventional approaches. Theoretically, we prove that by constructing a carefully designed input state for the ansatz $U(\boldsymbol{\theta})$ as a superposition of mutually orthogonal candidate states ${\ket{\psi_j}}$, one can systematically improve the performance of the given VQA. Empirically, across a range of models---including the one-dimensional Ising model, the two-dimensional Ising model, the cluster-Ising model, and the strongly correlated Fermi--Hubbard model---our method consistently outperforms HEA/HVA baselines under the same gate budget. A key advantage of this approach is its universality: it integrates seamlessly with any variational ansatz architecture and can be readily combined with existing techniques for noise mitigation, parameter initialization, and classical optimization. These features position input-state design as a practical tool for improving quantum state preparation and ground-state energy estimation on near-term hardware, particularly in regimes where circuit depth and quantum resources are limited.

\begin{acknowledgments}
This work was supported by the National Natural Science Foundation of China (Grants No.~92265208, No.~12274059, No.~12574528 and No.~1251101297) and the Sichuan Science and Technology Program (Grant No.~2025YFHZ0336). 
\end{acknowledgments}

\appendix*

\section{Circuit design of the low-depth encoder}\label{V_exmple}

In this section, we present a concrete example demonstrating how the proposed input-state design is realized in practice through the construction of the encoding layer $V(\boldsymbol{\gamma})$. Specifically, we design $V(\boldsymbol{\gamma})$ to prepare a superposed input state to solve the ground energy of a $12$-qubit one-dimensional Ising model. We begin by training a $5$-layer HEA as the baseline circuit. Following Algorithm~\ref{alg:enhanced_vqe}, six computational-basis states are then selected—$|0\rangle$, $|30\rangle$, $|60\rangle$, $|480\rangle$, $|960\rangle$, and $|2049\rangle$---which together span a low-energy subspace identified by the pre-optimized HEA. The encoding layer $V(\boldsymbol{\gamma})$ is constructed to combine these basis states into a superposition that serves as the new input-state, $\ket{\Psi_0(\boldsymbol{\gamma})}=V(\boldsymbol{\gamma})\,|0\rangle$.

The encoder $V(\boldsymbol{\gamma})$ is composed of four components $U_1, U_2, U_3,$ and $U_4$:
\begin{align*}
U_1 &= CR_y^{(4,3)} \cdot CR_z^{(4,3)} \cdot CR_y^{(4,7)} \cdot CR_z^{(4,7)}\\
 &\quad \cdot CR_y^{(10,11)} \cdot CR_z^{(10,11)} \cdot CR_y^{(10,6)} \cdot CR_z^{(10,6)}\\
 &\quad \cdot CR_y^{(10,7)} \cdot CR_z^{(10,7)}\\
U_2 &= CCR_y^{(\bar{4},\bar{10},1)} \cdot CCR_z^{(\bar{4},\bar{10},1)} \cdot CCR_y^{(\bar{4},\bar{10},12)} \cdot CCR_z^{(\bar{4},\bar{10},12)}\\
U_3 &= CR_y^{(6,8)} \cdot CR_z^{(6,8)} \cdot CR_y^{(6,9)} \cdot CR_z^{(6,9)}\\
 &\quad \cdot CR_y^{(6,10)} \cdot CR_z^{(6,10)} \cdot CR_y^{(6,4)} \cdot CR_z^{(6,4)}\\
 &\quad \cdot CR_y^{(6,5)} \cdot CR_z^{(6,5)}\\
 U_4 &= R_y^{(6)} 
\end{align*}
The complete encoding layer is then $V(\boldsymbol{\gamma}) = U_1\, U_2\, U_3\, U_4$. Here, $CR_\alpha^{(c,t)}$ denotes a controlled rotation $R_\alpha$ on target qubit $t$ conditioned on control qubit $c$, and $CCR_\alpha^{(c_1,c_2,t)}$ denotes a control-control rotation on target $t$ conditioned on controls $c_1$ and $c_2$. To execute the encoder on NISQ hardware, the three-qubit gates are decomposed into one- and two-qubit operations. Following Ref.~\cite{Nielsen_Chuang_2010}, a control-control-unitary can be implemented using at least five basic gates. In total, constructing $V(\boldsymbol{\gamma})$ for this example requires approximately $41$ elementary gates—comparable to the $60$ gates used by a single HEA layer. This shows that the number of gates required to construct $V(\bm\gamma)$ is approximately the same as that for one layer of the ansatz, indicating that the comparison is fair. Moreover, the circuit complexity of $V(\boldsymbol{\gamma})$ scales linearly with the number of qubits, confirming that the encoder remains a low-depth circuit.

\bibliographystyle{apsrev4-2}
\bibliography{ref}

\clearpage

\end{document}